\documentclass[conference]{IEEEtran}
\usepackage{graphicx} 
\usepackage[letterpaper, 
            left=0.65in,   
            right=0.65in,  
            top=0.75in, 
            bottom=1.05in] 
            {geometry}
\usepackage{amsmath,amsthm,amssymb}

\long\def\/*#1*/{}

\usepackage[normalem]{ulem}
\newcommand\redout{\bgroup\markoverwith{\textcolor{red}{\rule[.5ex]{2pt}{0.4pt}}}\ULon}

\newtheorem{theorem}{Theorem}
\newtheorem{proposition}{Proposition}

\newtheorem{lemma}{Lemma}

\newcommand{\Geomv}{\mathsf{Geom}}

\usepackage{texdef2020}
\usepackage{float}
\usepackage{xcolor}
\usepackage{tikz}
\usepackage{color,soul}
\usetikzlibrary{patterns}
\usepackage{algorithm}
\usepackage{algpseudocode}
\usepackage{hyperref}
\theoremstyle{remark}
\newtheorem{remark}{Remark}
\newcommand{\xtic}[1]{(#1,0) -- ++(0,-0.25)}

\newcommand{\I}[1]{I_{\set{#1}}}
\usepackage{cite}

\title{Age and Stability Trade-offs in Remote Monitoring Systems}
\author{%
 \IEEEauthorblockN{Nitya Sathyavageeswaran, Anand D.~Sarwate, Roy D.~Yates, Narayan Mandayam}
 \IEEEauthorblockA{Rutgers, The State University of New Jersey \\
                   Email: \{nitya.s, anand.sarwate\}@rutgers.edu, \{ryates, narayan\}@winlab.rutgers.edu}
}

\date{March 2026}

\begin{document}

\maketitle

\begin{abstract}
    Timely information is important in a wide variety of Internet of Things (IoT) services in which a shared server must manage two competing tasks: (i) processing a queue of jobs, and (ii) generating status updates to a remote monitor. This creates a fundamental trade-off between queue stability and data freshness. In this work, we model this scheduling decision as a Markov Decision Process (MDP) with the objective of minimizing a weighted sum of the average Age of Information (AoI) and the average queue length. We show that the optimal scheduling strategy is a queue-dependent age threshold which is monotonic. The shape of the switching curve differs according to different priority regimes. Finally, we compare the optimal MDP policy against heuristic policies.  
\end{abstract}

\section{Introduction}
In modern Internet of Things (IoT) systems, the utility of collected data degrades rapidly over time, making timely information delivery critical~\cite{shukla2023improving}.
As an example, consider an industrial application in which an installed device has a camera that is used to perform continuous high-speed quality control on an assembly line. Image processing is done locally and, to obtain high throughput, the local processor operates in a heavy-traffic regime.
However, the camera must also periodically transmit status updates to a monitor which is tracking the entire production line. Since these updates share the local processor, generating them temporarily degrades the primary image-processing job pipeline to a slower service. The severity of this delay depends on the type of update being sent; transmitting a simple summary update may be fast, but processing a high resolution frame or doing extensive post processing may be slow. 

In heavy traffic, even this brief interrupt for an update risks queue instability. This creates a fundamental trade-off between system stability and data freshness at the monitor. In this paper we use the Age of Information (AoI)~\cite{kaul2012real} metric to quantify the freshness of information at the monitor.  An \textit{always-update} policy prioritizes AoI but leads to queue instability, whereas an \textit{always-serve} policy stabilizes the queue but allows AoI to grow unbounded. Balancing these extremes requires a state-aware scheduling approach. 

In this paper we formulate the problem of balancing AoI and stability as a Markov Decision Process (MDP) in which a local shared processor must dynamically choose whether to service a queue holding an incoming stream of data processing jobs or generate a fresh status update for the monitor. We use a weighted sum of the average age and the expected queue length as a cost function. We show that the  optimal scheduling policy follows a state-aware monotonic switching curve that adapts to different parameter regimes. We empirically compare the optimal policy to two simple heuristics: a Memoryless policy that randomly chooses to send an update, and a Myopic policy that minimizes the expected cost in the next time step. Because our optimal policy does not have a closed form expression, we identify a low-complexity heuristic that approximates the switching curve and demonstrates that it can achieve near-optimal performance. 

\noindent \textbf{Related work.} Average age has been extensively studied for various queuing systems~\cite{kaul2012real, kam2018age,KamPathdiversity, yates2018age}. Prior work has examined scheduling strategies for minimizing long run average age in multi-user settings~\cite{kadota2018optimizing, sun2018age, Hsu2020Scheduling}.
MDP-based formulations have also been used to derive optimal age-aware policies~\cite{hsu2018age, CeranHARQ, tang2020minimizing, bedewy2021optimal, sathyavageeswaran2024timely}.  However, these works typically focus solely on freshness, without explicitly accounting for queue stability constraints under shared processing resources. Recent works have also begun to incorporate service delays and system dynamics into the design of update policies. In particular, sampling and scheduling problems with stochastic service times and queuing effects have been studied~\cite{ornee2019sampling, sun2017remote, sun2017update}.

\section{System Model and Problem Formulation}

\noindent \textbf{Notation.} Random variables will be denoted by capital letters and realizations by lower case letters. Time passes in integer slots with slot $n\ge0$ denoting the time interval $[n,n+1)$.

We study a discrete time system consisting of a data source, a status update generator, a shared local processor, a queue, and a remote monitor as shown in Fig.~\ref{fig:sysmodel}. 
\begin{figure}
    \centering
    \includegraphics[width=\linewidth]{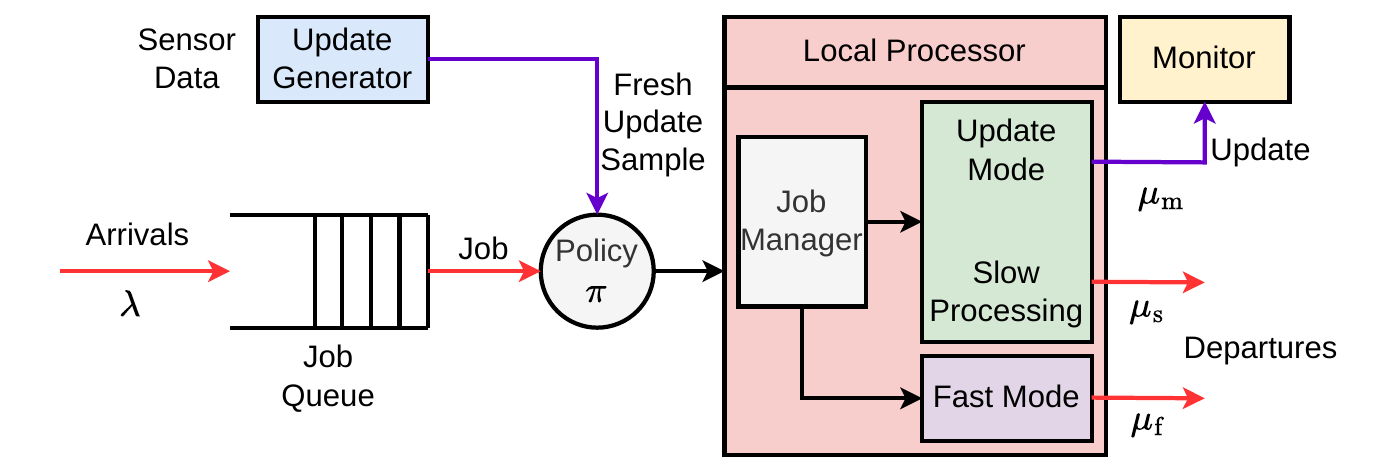}
    \caption{System model for the status update and the queueing process.}
    \label{fig:sysmodel}
\end{figure}Within a slot $n$, the following events happen in order:
\begin{enumerate}
    \item Arrivals: Data processing jobs arrive at a First-Come-First-Served (FCFS) queue. We model this as a Bernoulli process with rate $\lambda$, where $X_n = 1$ if an arrival occurs in slot $n$, and $X_n = 0$ otherwise. 
    \item Scheduling: The scheduler observes the system state (queue length $Q_n$ and age $A_n$), and chooses an action $U_n \in \{0,1\}$. 
    \item  Service and Delivery:
    \begin{itemize}
        \item If $U_n=0$ (\textit{Serve Queue}),  the local processor serves the queue {\em fast}. A job is successfully processed and departs with probability $\mu_\mathrm{f}$.
        \item If $U_n=1$ (\textit{Generate Update}), the local processor pulls a fresh update from the source and spends one time slot attempting to deliver to the monitor. This new status update is successfully delivered at the end of the slot with probability $\mu_\mathrm{m}$. Concurrently, the processor serves the data queue at a {\em slow} rate; a job departs with probability $\mu_\mathrm{s}\le \mu_\mathrm{f}$.
    \end{itemize} 
\end{enumerate}

\subsection{Problem Formulation}

The scheduler determines the server's operating mode in each slot. We model the scheduler's actions over time as a sequence $\mathbf{U} = (U_1, U_2, \ldots)$, where
\begin{align}
    U_n&=\begin{cases}
        1 &\text{Generate Update (Update Mode)}\\
        0 &\text{Serve Queue (Fast Mode)}.
    \end{cases} \label{eq:actionscheduler}
\end{align}

To  use the AoI metric to measure timeliness, let $R_n = \max \{ k \le n\colon \text{update generated at } k \text{ delivered} \}$ be the generation time of the most recent update at the monitor. The age at the monitor at slot $n$ is $A_n = n - R_n$. The age increases by 1  in every slot where a successful update does not occur.
The average age is defined as\footnote{Actions occur slot by slot in discrete time but age is averaged continuously over the slot; when $A_n=a$, the average age over slot $n$ is $a+1/2$.}
\begin{align}
    \Delta = \limty{N} \frac{1}{N}\sum_{n=1}^{N} (A_n +1/2) \eqnlabel{sum-age}.
    \end{align}

For the data processing job queue, an arrival and departure can occur in the same slot. The evolution of the queue backlog $Q_n$  is governed by arrivals $X_n$ and departures $D_n$:
\begin{align}
    Q_{n+1} = Q_n - D_n + X_n,
\end{align}
where the departure probability is given by 
\begin{IEEEeqnarray}{rCl}
    P(D_n=1 \mid U_n, Q_n, X_n) = \begin{cases}
        \mu(U_n) & \text{if } Q_n + X_n > 0 \\
        0 &\text{if } Q_n + X_n = 0,
    \end{cases}\IEEEeqnarraynumspace
\end{IEEEeqnarray}
where $\mu(0) = \mu_\mathrm{f}$  and $\mu(1) = \mu_\mathrm{s}$. Similarly, we define the average queue length as:
\begin{align}
    \bar{Q} = \lim_{N \to \infty} \frac{1}{N}\sum_{n=1}^{N} Q_n. \eqnlabel{avg-queue}
\end{align}

Our goal is to minimize the weighted sum of the average age and the average queue length. The average cost is defined as:
\begin{align}
    \bar{C} = \lim_{N \to \infty} \frac{1}{N}\sum_{n=1}^{N} (A_n + \frac{1}{2} + w Q_n),
\end{align}
where $w > 0$ is a weight parameter representing the penalty for queue backlog relative to staleness.

Given these ingredients, we formulate the problem as a Markov Decision Process (MDP). We define the state of the system as $S_n = (Q_n, A_n)$. The MDP consists of a tuple $(\mathcal{S}, \mathcal{U}, P, C)$: $s=(q, a) \in \mathcal{S}$, where $q \in \mathbb{N}_0$ and $a \in \mathbb{N}$ are the countably infinite set of states, and $u \in \mathcal{U}$ is the action defined in \eqref{eq:actionscheduler}.

We assume job queue arrivals are a rate $\lambda$ Bernoulli process, independent of data service and update generation. The transition function $P((q', a') \mid (q, a), u)$ governs the probability of the transition $S_n \to S_{n+1}$. Since only one arrival or departure can happen per slot, the transition probabilities $P(q \to q'\mid u)$ conditioned on the action $U_n=u$ are given as follows
\begin{align}
P(q\rightarrow q'\mid u) = 
\begin{cases} 
\lambda(1 - \mu(u)) & q' = q+1 \\
(1 - \lambda)\mu(u) \I{q > 0} & q' = q-1 \\
\begin{aligned}
&1 - \lambda(1 - \mu(u)) \\
&- (1 - \lambda)\mu(u) \I{q > 0}
\end{aligned} & q' = q \\
0 & \text{otherwise}.
\end{cases}
\end{align}

Based on the chosen action, the joint state transitions are:
\begin{itemize}
    \item If $u = 0$ (Serve Queue), the age increases by 1: 
    \begin{align}
        P((q', a+1) \mid (q, a), 0) = P(q \to q' \mid 0).
    \end{align}
    \item If $u = 1$ (Generate Update), the age resets to 1 with probability $\mu_\mathrm{m}$:
    \begin{align}
        P((q', 1) \mid (q, a), 1) &= \mu_\mathrm{m} P(q \to q' \mid 1), \\
        P((q', a+1) \mid (q, a), 1) &= (1 - \mu_\mathrm{m}) P(q \to q' \mid 1).
    \end{align}
\end{itemize}

We model the immediate cost $C$ of taking action $U$ in state $S$ as
\begin{align}
    C_n= C(S_n, U_n) = A_n + \frac{1}{2} + w Q_n.
\end{align}
We want to find a policy $f^*$ that minimizes the long term average cost given by
\begin{align}
    V_{f}(s) = \limsup_{N\to \infty}\frac{1}{N}\Eop_{f}\left[ \sum_{n=1}^{N}  C(S_n,U_n) \middle\vert S_0 = s \right]. \eqnlabel{eq:avgcost}
\end{align}

\section{Heuristic Policies}

We first analyze a few heuristic policies to understand the age/queue length tradeoff. 

\subsection{Memoryless inspection}
\label{sec:memoryless}
In this policy, the scheduler makes its decisions independently of the system state. In every slot, the scheduler chooses to generate a status update with fixed probability $p$. 
The average age at the monitor for this policy is given by 
\begin{align}
    \Delta_\mathrm{mem} & = \frac{1}{p\mu_\mathrm{m} } +  \frac{1}{2}\eqnlabel{agememoryless}. 
\end{align}
The detailed derivation of this average age is provided in appendix~\ref{app:memoryless}. 
The effective service rate, $\mu _\mathrm{mem}$ for this policy, which is the probability of a job departure from the queue in a slot is given by
\begin{align}
    \mu_{\mathrm{mem}} &= p \mu_{\mathrm{s}} + (1-p) \mu_{\mathrm{f}} \eqnlabel{mueffmemoryless}. 
\end{align}

The queue evolves as a discrete-time Geo/Geo/1 queue. Using standard results for discrete-time Markov queues~\cite{ross2014introduction}, the steady-state average queue length is  given by 
\begin{align}
    \bar{Q}_\mathrm{mem} = \frac{\lambda(1-\mu_{\mathrm{mem}})}{\mu_{\mathrm{mem}} - \lambda} \eqnlabel{qlengthmem}.
\end{align}
A fundamental requirement for the system is to maintain queue stability, ensuring that the average queue length $\bar{Q}$ remains finite. Given the arrival rate $\lambda$ and the effective service rate $\mu_{\mathrm{mem}}$, the system is stable if and only if $\lambda < \mu_\mathrm{mem}$. This implies that the update probability must satisfy
\begin{align}
 p<   p_{\max}\triangleq  \frac{\mu_{\mathrm{f}} - \lambda}{\mu_{\mathrm{f}} - \mu_{\mathrm{s}}} \eqnlabel{stability_memoryless}.
\end{align}

From \eqnref{agememoryless}, we see that the average age decreases as the update probability $p$ increases. However, a higher $p$ reduces the effective service rate which in turn increases the queue backlog. This competing dynamic implies there is a unique update probability $p^*$ that optimally resolves this tradeoff, which is given in the following proposition.

\begin{proposition}
\label{lem:optimal_p}
    For the Memoryless policy where updates are chosen with probability $p$ in each slot, the probability $p^*$ that minimizes the average cost $\bar{C}(p)$ is
\begin{align}
    p^* = \min\left(1, \frac{\mu_{\mathrm{f}} - \lambda}{\Delta\mu + \sqrt{w\lambda(1-\lambda)\mu_{\mathrm{m}}\Delta\mu}} \right),
\end{align}
where $\Delta\mu = \mu_{\mathrm{f}} - \mu_{\mathrm{s}}$, $\lambda < \mu_{\mathrm{f}}$. If $\lambda \ge \mu_{\mathrm{f}}$, no stable policy exists.
\end{proposition}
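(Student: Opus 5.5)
We write the cost as a function of $p$ on the stability region, $p\in(0,p_{\max})$, using the memoryless average age and queue-length formulas already derived:
\begin{align*}
\bar C(p) = \frac{1}{p\mu_\mathrm{m}} + \frac12 + w\,\frac{\lambda(1-\mu_{\mathrm{mem}})}{\mu_{\mathrm{mem}}-\lambda},\qquad \mu_{\mathrm{mem}} = \mu_\mathrm{f} - p\Delta\mu .
\end{align*}
First, if $\lambda\ge\mu_\mathrm{f}$, then $\mu_{\mathrm{mem}}\le\mu_\mathrm{f}\le\lambda$ for every $p\in[0,1]$, so no memoryless policy is stable. The case $p=0$ is also excluded because its age is infinite. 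For the general claim that no stable policy exists, we use that the per-slot departure probability never exceeds $\mu_\mathrm{f}$, so the drift of $Q_n$ is nonnegative under any policy.

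Next, simplify the queue term. Set $g=\mu_{\mathrm{mem}}-\lambda=(\mu_\mathrm{f}-\lambda)-p\Delta\mu$. Then $1-\mu_{\mathrm{mem}}=(1-\lambda)-g$, which gives
\begin{align*}
\bar C(p)=\frac{1}{p\mu_\mathrm{m}}+\frac{w\lambda(1-\lambda)}{g}+\text{const}.
\end{align*}
Both terms are convex in $p$ on $(0,p_{\max})$: $1/p$ is convex, and $1/g$ is the reciprocal of a positive affine function. The cost also tends to $+\infty$ at both endpoints. So $\bar C$ has a unique minimizer on $(0,p_{\max})$, and it is the root of the first-order condition
\begin{align*}
\frac{1}{p^2\mu_\mathrm{m}} = \frac{w\lambda(1-\lambda)\Delta\mu}{g^2}.
\end{align*}
Taking positive square roots gives $g=p\sqrt{w\lambda(1-\lambda)\mu_\mathrm{m}\Delta\mu}$. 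Substituting $g$ yields
\begin{align*}
\mu_\mathrm{f}-\lambda = p\left(\Delta\mu+\sqrt{w\lambda(1-\lambda)\mu_\mathrm{m}\Delta\mu}\right),
\end{align*}
which is the stated expression for $p^*$. This root is automatically below $p_{\max}$, since its denominator exceeds $\Delta\mu$.

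The main point needing care is the truncation at $1$, because $p$ is a probability. If the unconstrained root exceeds $1$, then $p_{\max}>1$, so every $p\in(0,1]$ is stable. By convexity $\bar C$ is decreasing on $(0,1]$, and the constrained optimum is therefore $p=1$, giving the $\min(1,\cdot)$ form.

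Two degenerate cases remain. If $\Delta\mu=0$, the formula gives $p^*=\min(1,\infty)=1$, which agrees with the fact that the cost is then decreasing in $p$. If $\lambda=0$, the queue term vanishes, and the same conclusion $p^*=1$ holds.
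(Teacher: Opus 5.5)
Your proposal is correct and follows the paper's route: write $\bar C(p)$ from the memoryless age formula and the Geo/Geo/1 queue length, set the derivative to zero to get $1/(\mu_{\mathrm{m}}p^2)=w\lambda(1-\lambda)\Delta\mu/(\mu_{\mathrm{f}}-p\Delta\mu-\lambda)^2$, and solve. You also supply what the paper leaves implicit: convexity and blow-up at both ends of $(0,p_{\max})$, so the stationary point is the global minimizer; the root lying below $p_{\max}$; the $\min(1,\cdot)$ truncation; and the degenerate cases. The only soft spot is your one-line drift remark for arbitrary policies at $\lambda=\mu_{\mathrm{f}}$, where nonnegative drift alone does not rule out positive recurrence (a coupling with the zero-drift reflected walk would close it), but the paper does not prove that claim at all, and your memoryless-policy argument is complete.
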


The proof is provided in Appendix~\ref{app:lemma_optimal_p}.

\subsection{Myopic policy}
In this policy the scheduler minimizes the expected cost in the next time slot. At any time $n$ with state $S_n = (q,a)$, the scheduler selects the action $u\in\{0,1\}$ that minimizes the one-step expected cost $\E{C_{n+1} \mid S_n = (q,a), U_n = u}$.

By evaluating the expected next-step costs for updating versus serving the queue (detailed in Appendix~\ref{app:myopic_derivation}), we find that the Myopic scheduler behaves as a deterministic threshold policy. It generates a status update ($U_n = 1$) if the current age $A_n= a$ satisfies
\begin{align}
    a > \begin{cases} 
      \frac{w\lambda(\mu_{\mathrm{f}} - \mu_{\mathrm{s}})}{\mu_{\mathrm{m}}} & \text{if } q = 0 \\
      \frac{w(\mu_{\mathrm{f}} - \mu_{\mathrm{s}})}{\mu_{\mathrm{m}}} & \text{if } q > 0.
   \end{cases}
\end{align}
Otherwise, the scheduler chooses to serve the queue ($U_n=0$). We denote the deterministic age threshold for a back-logged queue ($q>0$) as $a^*_\mathrm{myopic} = w(\mu_\mathrm{f} - \mu_\mathrm{s})/ \mu_\mathrm{m}$.

To establish the stability of the Myopic policy, a standard approach would rely on Foster's stability criterion~\cite{foster1953stochastic} using a Lyapunov function such as $L(Q_n, A_n) = Q_n$. Basic versions of this criterion require the expected one-step drift to be strictly negative outside a finite set of states. However, under the Myopic Policy when $A_n > K$, the expected one-step queue drift, $\lambda - \mu_\mathrm{s}$ is strictly greater than zero.
Also the age process lacks autonomy since the threshold changes when $Q_n = 0$. To resolve this we adopt a methodology inspired by Yates et~al.~\cite{yates2015energy} and construct a fictitious autonomous age process. This lets us establish queue ergodicity using the Markov-modulated framework of Foss et~al.~\cite{foss2013stability}.

\begin{theorem}[Myopic Stability region]
\label{thm1}
    Under the Myopic scheduling policy, the queue is stable (positive recurrent) if the arrival rate satisfies the following condition:
    \begin{align}
        \lambda < \frac{K \mu_\mathrm{f} \mu_\mathrm{m} + \mu_\mathrm{s}}{K\mu_\mathrm{m} + 1},
    \end{align}
    where $K = \lfloor a^*_{\mathrm{myopic}} \rfloor$ is the deterministic age threshold for a backlogged queue.
\end{theorem}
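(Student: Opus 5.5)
The plan is to dominate the true queue by a queue driven by an autonomous, finite-state Markov modulating process, and then apply the Markov-modulated stability result of Foss et al.~\cite{foss2013stability}. Under the Myopic policy the backlogged ($q>0$) rule is: serve fast ($U_n=0$) while $A_n\le K$, and generate an update ($U_n=1$) while $A_n> K$, where $K=\lfloor a^*_{\mathrm{myopic}}\rfloor$. The actual age process is not autonomous, because at $q=0$ the threshold changes to $w\lambda\Delta\mu/\mu_\mathrm{m}\le a^*_{\mathrm{myopic}}$. Following Yates et al.~\cite{yates2015energy}, I will define a fictitious age process $\tilde A_n$ that always evolves under the backlogged rule. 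It increases by one each slot while $\tilde A_n\le K$. Once $\tilde A_n>K$ it attempts an update every slot, and each attempt resets it to $1$ independently with probability $\mu_\mathrm{m}$. Above $K$ only the event $\{\tilde A_n>K\}$ matters, so I collapse those states into a single state $K+1$. The result is a finite, irreducible, aperiodic Markov chain on $\{1,\dots,K+1\}$ that is independent of the arrivals.

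The first step is to compute the stationary behaviour of this modulating chain. One renewal cycle consists of $K$ deterministic fast-mode slots (ages $1,\dots,K$), followed by a $\Geomv(\mu_\mathrm{m})$ number of update-mode slots with mean $1/\mu_\mathrm{m}$. The stationary fraction of fast slots is therefore $\pi_\mathrm{f}=K\mu_\mathrm{m}/(K\mu_\mathrm{m}+1)$, and the fraction of update slots is $\pi_\mathrm{s}=1/(K\mu_\mathrm{m}+1)$. The average service rate offered to a backlogged queue is then
\begin{align*}
\bar\mu=\pi_\mathrm{f}\mu_\mathrm{f}+\pi_\mathrm{s}\mu_\mathrm{s}=\frac{K\mu_\mathrm{f}\mu_\mathrm{m}+\mu_\mathrm{s}}{K\mu_\mathrm{m}+1}.
\end{align*}
This is exactly the bound in the theorem. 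The condition $\lambda<\bar\mu$ is the drift condition $\sum_a \pi(a)(\lambda-\mu(a))<0$ required by the Markov-modulated criterion. Foss et al.\ then give positive recurrence of the pair $(\tilde Q_n,\tilde A_n)$, where $\tilde Q_n$ is the queue served at rate $\mu(u(\tilde A_n))$. The key point is that the criterion needs only negative drift averaged over the modulating chain's stationary law, not negative drift state by state. This sidesteps the problem noted before the theorem that the one-step drift $\lambda-\mu_\mathrm{s}$ is positive when $A_n>K$.

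The second step, which I expect to be the main obstacle, is transferring stability from the fictitious system to the real queue $Q_n$. The two systems agree whenever the queue is nonempty, since the policies then coincide. They can diverge only after visits to $q=0$, where the real policy may update earlier and so reset the age differently. I would first try a coupling of arrivals, service coins and update-success coins in which the real system restarts the fictitious age at each emptying epoch. The argument would then use regeneration: $Q_n$ returns to $0$ with finite mean return time. The reason is that from any real state with $q>0$ the subsequent excursion is exactly an excursion of $(\tilde Q,\tilde A)$ started from some age in the finite set $\{1,\dots,K+1\}$. Positive recurrence of the fictitious chain gives uniformly bounded mean hitting times of $\{q=0\}$ from each of these finitely many starting ages with $q=1$. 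During the idle periods at $q=0$ the queue leaves $0$ with probability at least $\lambda(1-\mu_\mathrm{f})>0$ per slot, so idle periods have geometric tails. Combining these two facts gives a finite mean cycle length, and hence positive recurrence of $Q_n$ under the Myopic policy. The delicate part is checking that the age at the start of a busy period stays in a set where these hitting-time bounds are uniform. Collapsing ages above $K$ should handle this.
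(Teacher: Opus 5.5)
Your proposal is correct, and it reaches the conclusion by a different route from the paper in the second step.

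\textbf{Where you agree with the paper.} Your first step is the paper's first step. Both replace the age by a fictitious process that always uses the backlogged threshold $K$. Both read off the stationary fractions $K\mu_\mathrm{m}/(K\mu_\mathrm{m}+1)$ and $1/(K\mu_\mathrm{m}+1)$ from a renewal cycle of mean length $K+1/\mu_\mathrm{m}$, and both apply Corollary~1 of Foss et al.\ with the averaged drift. Your collapse of all ages above $K$ into a single state is a harmless simplification that makes the modulator finite.

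\textbf{Where you differ.} The paper transfers stability to the real queue by asserting a sample-path ordering $Q_n\le\tilde Q_n$ under common arrivals. Its justification is that the lower empty-queue threshold $K_0$ only adds updates when $q=0$. You argue by excursions instead. Because the two policies coincide on $\{q>0\}$, the strong Markov property makes each real busy period started from $(1,a)$ equal in law to a fictitious excursion from $(1,\min(a,K+1))$. There are only $K+1$ such starting states, so these excursions have mean at most some $H<\infty$. Hence the collapsed real chain, which is Markov because $K_0\le K$, returns to the finite set $\{0\}\times\{1,\dots,K+1\}$ in mean time at most $1+H$, and positive recurrence follows.

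\textbf{What each route buys.} The paper's route is shorter. If the ordering held, it would also give more, namely a pathwise bound on queue lengths and hence on $\bar Q$. However, its one-line justification does not establish that ordering. Under the natural coupling of arrival, service and update coins, with generic parameters and $K_0<K$, an early successful update at $q=0$ puts $A_n$ out of phase with $\tilde A_n$. One can then reach $Q_n=\tilde Q_n>0$ with $A_n>K\ge\tilde A_n$. At that point the real queue is served at $\mu_\mathrm{s}$ and the fictitious one at $\mu_\mathrm{f}$, so the ordering fails with positive probability. Your argument never compares the two systems pathwise; it uses only that the policies agree on $\{q>0\}$. It also shows that the stability condition does not depend on which threshold is used at $q=0$.

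\textbf{Minor points.} The geometric bound on idle periods is unnecessary, since the return-time bound $1+H$ holds regardless; as stated it would also need $\mu_\mathrm{f}<1$. Aperiodicity of the collapsed modulator requires $\mu_\mathrm{m}<1$, but it is not needed for the argument.
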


We construct a fictitious system applying a constant threshold $K$, letting age operate as an independent renewal cycle. By renewal reward theory, its effective service rate is 
\begin{align}
     \mu_{\mathrm{eff}} = \frac{K \mu_{\mathrm{f}} + \mu_{\mathrm{s}}/\mu_{\mathrm{m}}}{K + 1/\mu_{\mathrm{m}}} = \frac{K \mu_{\mathrm{f}} \mu_{\mathrm{m}} + \mu_{\mathrm{s}}}{K\mu_{\mathrm{m}} + 1}.
\end{align}
By Corollary~1 of Foss et~al.~\cite{foss2013stability}, this independent system is stable if $\lambda < \mu_{\mathrm{eff}}$. Since the true Myopic policy clears backlogged jobs at least as fast as the fictitious system, it is inherently stable (full proof in Appendix~\ref{App:myp_stability}).

\section{Optimal Policy Analysis}

We now turn to finding the optimal policy $f^*$ which minimizes the long term average cost in \eqnref{eq:avgcost}. We first define the discounted cost $V_{f,\beta}(s)$ as follows:
\begin{align}
    V_{f,\beta}(s) = \Eop_{f} \left[ \sum_{t=0}^\infty \beta^t C(S_t, U_t) \middle\vert S_0 = s \right], \eqnlabel{eq:discounted_cost}
\end{align}
where $s=(q,a)$ is the initial system state and $\beta \in (0,1)$ is the discount factor. We define the optimal discounted value function
\begin{align}
    V_\beta(s) = \inf_{f} V_{f,\beta}(s).
\end{align}

If the value function $V_{\beta}(s)$ is finite, it satisfies the following discounted cost optimality equation:
\begin{align}
    V_\beta(s) &= \min_{u \in \{0,1\}} \Bigl( C(s,u) + \beta \sum_{s'} P_{ss'}(u) V_\beta(s') \Bigr), \label{eq:bellman}
\end{align}
for all $s \in \mathcal{S}$, where $P_{ss'}(u)$ is the transition probability from state $s$ to $s'$ under action $u$. 

We can define a value iteration $V_{\beta,n}(s)$ recursively as
\begin{align}
    &V_{\beta,n}(q,a) \nonumber \\
    &\quad= \min_{u \in \{0,1\}} \Bigl( C((q,a),u) + \beta \sum_{s'} P_{ss'}(u) V_{\beta,n-1}(s') \Bigr) \nonumber \\
    &\quad= a + wq + \min \Bigl( \beta \E{V_{\beta,n-1}(Q'_0, a+1)}, \nonumber \\
    &\qquad \qquad \beta \mu_{\mathrm{m}} \E{V_{\beta,n-1}(Q'_1, 1)} \nonumber \\
    &\qquad \qquad + \beta (1-\mu_{\mathrm{m}}) \E{V_{\beta,n-1}(Q'_1, a+1)} \Bigr), \label{eq:valuefunction}
\end{align}
for any $n \ge 1$, with $V_{\beta,0}(s)=0$ for all $s \in \mathcal{S}$ and $\beta \in (0,1)$, where $Q'_0$ and $Q'_1$ denote the next-slot queue lengths under Fast Mode and Update Mode, respectively.

Under Sennott's conditions~\cite{sennott1989average} (verified in Appendix~\ref{app:sennott}), there exists a differential cost function $h(s)$ and a stationary deterministic policy $f^*$ satisfying the average cost optimality equation
\begin{align}
 g + h(s) = \min_{u \in \{0,1\}} \Bigl( C(s,u) + \sum_{s'} P_{ss'}(u) h(s') \Bigr),
\end{align}
for all $s \in \mathcal{S}$, where $g$ is the minimum average cost.

\begin{remark}[Stability of the Optimal Policy]
Since the immediate cost function explicitly penalizes the queue length ($C(S_n, U_n) = A_n + 1/2 + w Q_n$ with $w > 0$), any policy that yields a finite average cost must inherently maintain a finite average queue length. In Section~\ref{sec:memoryless} we see that whenever $\lambda < \mu_{\mathrm{f}}$, a Memoryless policy with update probability $p < p_{\max}$ stabilizes the queue and achieves a finite average cost. By definition, the optimal policy $f^*$ achieves an average cost no greater than this Memoryless baseline. Therefore, the optimal policy guarantees a finite average queue length, and ensures strict system stability for any $\lambda < \mu_{\mathrm{f}}$.
\end{remark}

\subsection{Relative Value Iteration (RVI)}

 To numerically evaluate the optimal policy $f^*$, we truncate the state space to a finite grid $\mathcal{S}_{\mathrm{trunc}} = \{0, \dots, Q_{\max}\} \times \{1, \dots, A_{\max}\}$. The optimal policy for the truncated MDP is identical to that of the original infinite MDP as $Q_{\max}, A_{\max} \to \infty$, provided that the truncated MDP is \textit{unichain}~\cite{puterman1994markov}. Our truncated MDP satisfies this unichain condition because, under any stationary policy, the Markov chain consists of a single recurrent class. The state $(0,1)$ can be reached from all other states $(q,a) \in \mathcal{S}_{\mathrm{trunc}}$. This is because, from any state, there is a strictly positive probability of having a finite sequence of zero arrivals coupled with consecutive fast service completions, followed by a successful update.

We get the exact optimal policy over $\mathcal{S}_{\mathrm{trunc}}$ by applying RVI~\cite{puterman1994markov}. The value function is initially set to $V_0(s) = 0$ for all states $s \in \mathcal{S}_{\mathrm{trunc}}$. At each iteration $n \geq 0$, we update:
\begin{align}
    V_{n+1}(s) &= \min_{u} \left[ C(s,u) + \E{V_n(s')} \right] - V_n(s_{\mathrm{ref}}),
\end{align}
where $s_{\mathrm{ref}} = (0,1)$ is a fixed reference state. The iteration stops when the condition $\max_{s \in \mathcal{S}_{\mathrm{trunc}}} |V_{n+1}(s) - V_n(s)| < \epsilon$ is satisfied.

Fig.~\ref{fig:policy_map} and Fig.~\ref{fig:policy_map0.008} illustrate the exact optimal scheduling policy derived via RVI over the state space for three different values of the queue penalty weight $w \in \{0.1, 1, 10\}$. The optimal policy is a non-decreasing switching curve $a^*(q)$ which partitions the state space into two distinct operating regions. The blue region below the boundary indicates states where the optimal action is to serve the queue in Fast Mode ($u=0$) to clear backlog, and the red region above the boundary indicates states where the optimal action is to generate an update ($u=1$) to prioritize data freshness. As the penalty for queue backlog $w$ increases, this monotonic boundary (age threshold) shifts upward, which indicates the scheduler's hesitance to interrupt queue service for status updates.

Note: The dashed lines in Fig.~\ref{fig:policy_map} and Fig.~\ref{fig:policy_map0.008} represent a low-complexity structural heuristic that approximates this boundary, which we introduce and analyze in Section~\ref{sec:heuristic}.

\begin{figure}
    \centering
    \includegraphics[width=\linewidth]{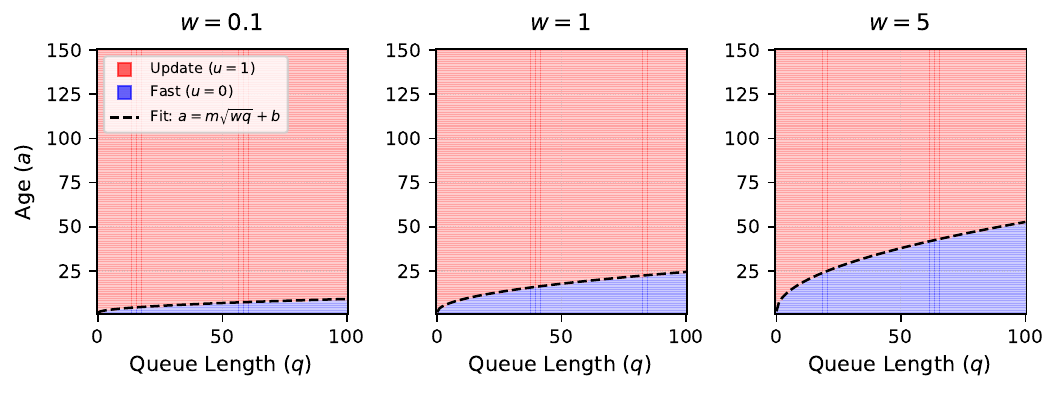}
    \caption{The optimal switching boundaries $a^*(q)$ for varying queue penalty weights $w \in \{0.1, 1, 5\}$. For any given weight curve, the region above the boundary indicates the Update Mode ($u=1$), while the region below indicates the Fast Mode ($u=0$). The system parameters are fixed at $\lambda = 0.4$, $\mu_{\mathrm{f}} = 0.5$,  $\mu_{\mathrm{s}} = 0.3$, and $\mu_{\mathrm{m}} = 0.8$.}
    \label{fig:policy_map}
\end{figure}

\begin{figure}
    \centering
    \includegraphics[width=\linewidth]{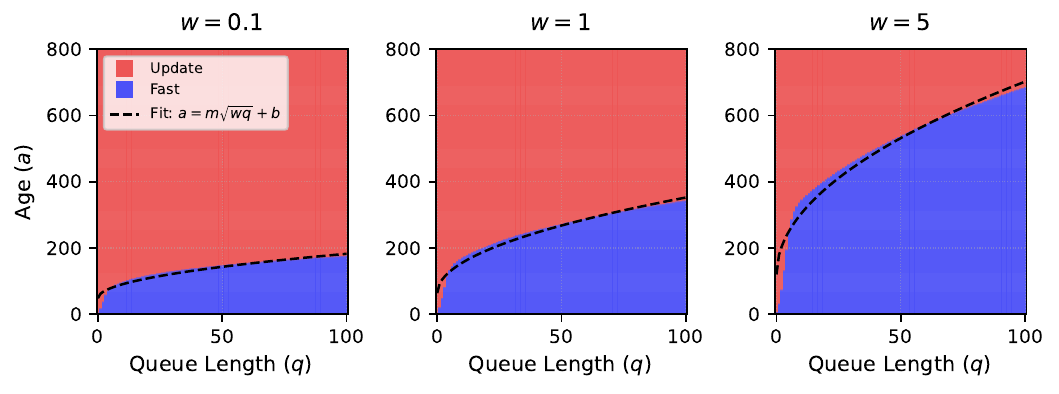}
    \caption{The optimal switching boundaries $a^*(q)$ for varying queue penalty weights $w \in \{0.1, 1, 5\}$. For any given weight curve, the region above the boundary indicates the Update Mode ($u=1$), while the region below indicates the Fast Mode ($u=0$). The system parameters are fixed at $\lambda = 0.4$, $\mu_{\mathrm{f}} = 0.5$,  $\mu_{\mathrm{s}} = 0.3$, and $\mu_{\mathrm{m}} = 0.008$.}
    \label{fig:policy_map0.008}
\end{figure}

\subsection{Low-Complexity Square-Root Heuristic}
\label{sec:heuristic}

While the exact MDP formulation yields an optimal policy $f^*$, storing an infinite-sized lookup table is impractical for implementation on resource-constrained edge devices. Our empirical evaluations using RVI reveal that the optimal switching boundary exhibits a distinct square-root shape. To reduce the computational complexity, we propose a low-complexity heuristic based on a square-root curve.

We define the Square-Root Heuristic, denoted as $f_{\mathrm{sqrt}}$, which maps the current system state $S_n = (q, a)$ to an action $u \in \{0, 1\}$ as follows:
\begin{align}
    f_{\mathrm{sqrt}}(q, a) = 
    \begin{cases} 
        1 & \text{if } a > m\sqrt{wq} + b \\
        0 & \text{otherwise},
    \end{cases}
    \eqnlabel{eq:sqrt_heuristic}
\end{align}
where $m > 0$ and $b \ge 0$ are tunable parameters. 

Here, $b$ represents the baseline age tolerance of the system when the queue is empty ($q=0$). The parameter $m$ governs the scheduler's sensitivity to queue backlog. A larger $m$ dictates a steeper initial penalty for the queued jobs, delaying status updates to prioritize queue stability. As shown by the dashed lines in Fig.~\ref{fig:policy_map}, and Fig.~\ref{fig:policy_map0.008}, this heuristic uses just two parameters to approximate the optimal policy, avoiding the need for the lookup table over the infinite space. The optimal parameters $(m^*, b^*)$ can be determined offline by applying ordinary least squares regression to the exact MDP switching boundary. Alternatively, for systems where computing the full MDP is computationally prohibitive, these parameters can be found directly via a low-dimensional grid search over the target stability region.

We compare optimal MDP policy and our square-root heuristic against two simpler heuristic baselines: the Memoryless policy and the Myopic policy. 
Fig.~\ref{fig:baselines} and Fig.~\ref{fig:baselinesmum0.008} illustrate the trade-off between the average queue length $\bar{Q}$ and the average Age of Information $\Delta$ at the monitor under fast ($\mu_{\mathrm{m}} = 0.8$) and severely constrained ($\mu_{\mathrm{m}} = 0.008$) update channels.
\begin{figure}
    \centering
    \includegraphics[width= 0.8\linewidth]{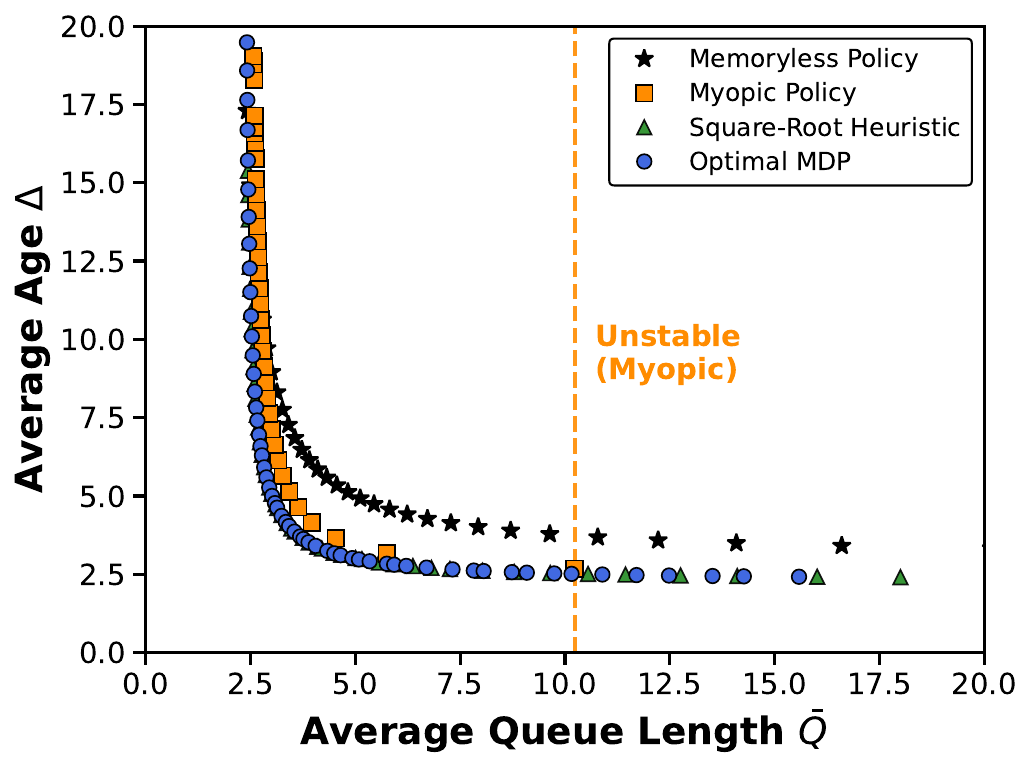}
    \caption{Performance comparison of the Optimal MDP, Square-Root heuristic, Myopic, and Memoryless policies. The Pareto frontiers are generated by varying the queue penalty weight $w$ for the Optimal MDP, the deterministic age threshold $K$ for the Myopic policy, and the update probability $p$ for the Memoryless policy. System parameters $\lambda = 0.4$, $\mu_{\mathrm{f}} = 0.5$, $\mu_{\mathrm{s}} = 0.3$, and $\mu_{\mathrm{m}} = 0.8$.  }
    \label{fig:baselines}
\end{figure}
\begin{figure}
    \centering
    \includegraphics[width= 0.8\linewidth]{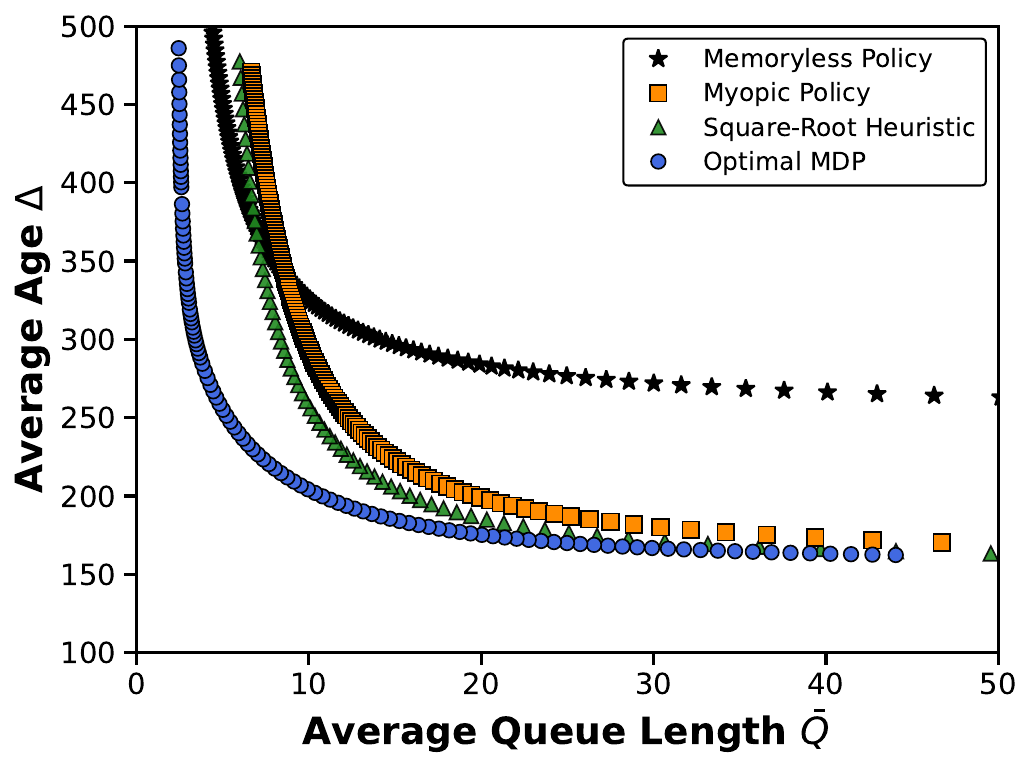}
    \caption{Performance comparison of the Optimal MDP, Square-Root heuristic, Myopic, and Memoryless policies under a severely constrained update channel. System parameters $\lambda = 0.4$, $\mu_{\mathrm{f}} = 0.5$, $\mu_{\mathrm{s}} = 0.3$, and $\mu_{\mathrm{m}} = 0.008$.}
    \label{fig:baselinesmum0.008}
\end{figure}
The optimal MDP curve is generated by using RVI across varying queue penalty weights $w$, establishing the best achievable Pareto frontier. As $w$ increases, the MDP heavily penalizes buffered jobs, minimizing $\bar{Q}$ while allowing $\Delta$ to grow. Conversely, as $w$ approaches zero, the policy prioritizes data freshness, minimizing $\Delta$ at the cost of a larger queue. Our simple Square-Root heuristic achieves near-optimal performance, overlapping with the Optimal MDP curve across the entire stable operating region. 

For the Memoryless policy, the average age and queue length are plotted using the exact analytical expressions derived in \eqnref{agememoryless} and \eqnref{qlengthmem}. The update probability $p$ is varied continuously from $0$ up to the stability limit $p_{\max}$. As $p$ increases, the system generates updates more frequently, and the average age reduces. However, because the policy makes scheduling decisions independently of the current queue backlog or data staleness, it yields the worst performance. 

For the Myopic policy, the average age and queue length are evaluated numerically by computing the stationary distribution of the discrete-time Markov chain induced by the deterministic age threshold $K$. As $w \to 0$, the threshold $K$ reduces, forcing the system to spend more time in the slow Update Mode ($\mu_{\mathrm{s}} = 0.3$). Since $\lambda = 0.4 > \mu_{\mathrm{s}}$, the effective service capacity falls below the stability bound from Theorem~\ref{thm1}. The vertical dashed line in Fig.~\ref{fig:baselines} marks this theoretical breaking point, beyond which the Myopic queue grows unbounded. 

The impact of the update rate $\mu_\mathrm{m}$ is stark when comparing the two regimes. Under fast updates (Fig.~\ref{fig:baselines}), the Square-Root heuristic achieves near-optimal performance, overlapping the MDP curve almost entirely. However, in the severely constrained environment (Fig.~\ref{fig:baselinesmum0.008}), an update takes an expected $1/\mu_\mathrm{m} = 125$ slots, and a performance gap emerges between the Square-Root heuristic and the Optimal MDP. As seen in Fig.~\ref{fig:policy_map0.008} for $w=5$, the optimal policy remains highly update-averse to prevent queue overflow adopting a boundary that is offset from the heuristic. Also, the heuristic becomes infeasible for queue lengths below a certain threshold.
This highlights the need of an exact, state-aware scheduling policy when managing extreme communication bottlenecks.

\section{Conclusion}

In this paper, we studied the fundamental trade-off between data freshness and queue stability in remote monitoring IoT systems. We formulate the scheduling problem as a Markov Decision Process, and show that the optimal policy for minimizing a weighted sum of Age of Information (AoI) and queue length  follows a state-aware, monotonic switching curve. Our numerical evaluations reveal that this optimal boundary is well-approximated by a low-complexity square-root heuristic, except when the update channel is severely constrained. Furthermore, we compared the optimal approach against Memoryless and Myopic baselines, illustrating that state-aware scheduling not only improves the Pareto frontier but also drastically expands the stable operating region of the system.

\section*{Acknowledgements}

This material is based upon work supported by the National Science Foundation under grant CNS-2148104 and is supported in part by funds from federal agency and industry partners as specified in the Resilient \& Intelligent NextG Systems (RINGS) program.
\clearpage



\IEEEtriggeratref{11}
\bibliographystyle{IEEEtran}
\bibliography{agebib.bib}

\newpage\clearpage
\section{Appendix}

\subsection{Memoryless Inspection Age Analysis}
\label{app:memoryless}

The sample age process at the monitor for this policy is shown in Figure~\ref{Memoryless}. 
\begin{figure}
    \centering
    \begin{tikzpicture}[scale=0.2]
\draw [fill=lightgray, ultra thin, dashed] (2,0) to (2,2) to (10,10) to (10,0) ;
\draw [fill=lightgray, ultra thin, dashed] (10,0) to (10,2) to (14,6) to (14,0) ;
\draw [fill=lightgray, ultra thin, dashed] (22,0) to (22,2) to (26,6) to (26,0) ;
\draw [fill=lightgray, ultra thin, dashed] (26,0) to (26,2) to (30,6) to (30,0) ;

\draw [<-|] (-2,16) node [above] {$\Delta(t)$} -- (-2,0) -- (16,0);
\draw [|->] (18,0) -- (36,0) node [right] {$t$};

\draw [thin]\xtic{0}\xtic{2}\xtic{4}\xtic{6}
\xtic{8}\xtic{10}\xtic{12}\xtic{14}\xtic{16} \xtic{18}\xtic{20}\xtic{22}\xtic{24}
\xtic{26}\xtic{28}\xtic{30}\xtic{32}\xtic{34}; 

\draw [<->] (2,1.8) -- node [right] {$X_1=1$} (2,0.2);
\draw [<->] (10,1.8) -- node [right] {$X_2=1$} (10,0.2);
\draw [<->] (26,1.8) -- node [right] {$X_m=1$} (26,0.2);

\draw (0,1.5) node {$\Tilde{B}_0$};
\draw[<-] (6,1.5) to [out=110,in=250] (6,11) node [above] {$B_1$};
\draw[<-] (12,1.5) to [out=110,in=250] (12,11) node [above] {$B_2$};
\draw[<-] (24,1.5) to [out=110,in=250] (24,11) node [above]{$B_{m-1}$};
\draw[<-] (28,1.5) to [out=110,in=250] (28,11) node [above]{$B_{m}$};

\draw [very thick] (-2,3) -- (2,6) -- (2,2) -- (10,10) -- (10,2)  -- (14,6) -- (14,2) -- (16,4);
\draw [very thick] (22,2)  -- (26,6) -- (26,2) -- (30,6) -- (30,2) -- (32,4); 

\draw (2,0) node {$\bullet$} node [below=6pt] {$U_1$};
\draw (10,0) node {$\bullet$} node [below=6pt] {$U_2$};
\draw (14,0) node {$\bullet$} node [below=6pt] {$U_3$};
\draw (22,0) node {$\bullet$} node [below=6pt] {$U_{m-1}$};
\draw (26,0) node {$\bullet$} node [below=6pt] {$U_m$};
\draw (30,0) node {$\bullet$} node [below=6pt] {$\mathcal{T}_m$};

\draw  [|<->|] (2,-4) to node [below] {$S_1$} (10,-4);
\draw  [|<->|] (10,-4) to node [below] {$S_2$} (14,-4);
\draw  [|<->|] (22,-4) to node [below] {$S_{m-1}$} (26,-4);
\draw  [|<->|] (26,-4) to node [below] {$S_m$} (30,-4);
\end{tikzpicture}
    	\caption{Sample variation of the age process at the monitor for the Memoryless inspection policy.} 
\label{Memoryless}
\end{figure}
The average age at the monitor is the area under the saw tooth function in Figure~\ref{Memoryless} normalized by the total time of observation $\mathcal{T}_m$. The total area under the function is the sum of areas $\Tilde{B}_0, B_1, \ldots, B_m$.
\begin{align}
    \Delta_{\mathcal{T}_m} = \frac{\Tilde{B}_0 + \sum_{i=1}^m B_i}{\mathcal{T}_m} \eqnlabel{deltatm}
\end{align}

The area $B_i$ can be calculated as the sum of the area of $2$ different entities: i) a rectangle with height $X_i$ (where $X_i$ is the age of the newly delivered update at the time of delivery) and whose base connects the update delivery points $U_i$ and $U_{i+1}$, ii) an isosceles triangle representing the continuous age growth with width $S_i$ and whose base connects the points $U_i$ and $U_{i+1}$. We then get
\begin{align}
    B_i = X_i S_i + \frac{1}{2}S_i^2 \eqnlabel{biarea}
\end{align}

Substituting \eqnref{biarea} into~\eqnref{deltatm} and rearranging some terms yields
\begin{align}
    \Delta_{\mathcal{T}_m} = \frac{\Tilde{B}_0}{\mathcal{T}_m} + \frac{m}{\mathcal{T}_m}\frac{\sum_{i=1}^m \left(X_i S_i + \frac{1}{2}S_i^2\right)}{m}
\end{align}

As $\mathcal{T}_m\to \infty$, the first term vanishes and we get
\begin{align}
    \lim_{\mathcal{T}_m\to \infty} \frac{m}{\mathcal{T}_m} = \frac{1}{\E{S}}
\end{align}

The summation term is a sample average and it will converge to its stochastic average when $\mathcal{T}_m\to \infty$. Since the queue inspection process is independent of the update generation process, we have $\E{X_i S_i} = \E{X}\E{S}$. The average age at the monitor can now be calculated as follows:
\begin{align}
    \Delta &= \lim_{\mathcal{T}_m\to \infty} \Delta_{\mathcal{T}_m} \nonumber \\
    &= \frac{\E{X}\E{S} + \frac{1}{2}\E{S^2}}{\E{S}} \nonumber \\
    &= \E{X} + \frac{1}{2}\frac{\E{S^2}}{\E{S}} \label{eq:mem_age_expectation}
\end{align}

Under the system model, when an update is successfully generated, the local processor pulls a fresh sample from the source. Therefore, the age of the newly delivered update at the start of the next slot is exactly $X = 1$, yielding an expected value of $\E{X} = 1$. 

In the Memoryless inspection policy, an update is triggered with probability $p$ in each slot, and is successfully generated with probability $\mu_{\mathrm{m}}$. Thus, the inter-update time $S_i$ follows a geometric distribution $S \sim \Geomv(p\mu_{\mathrm{m}})$. This yields
\begin{align}
    \E{S} &= \frac{1}{p\mu_{\mathrm{m}}} \\
    \E{S^2} &= \frac{2 - p\mu_{\mathrm{m}}}{(p\mu_{\mathrm{m}})^2}
\end{align}

Substituting these moments and $\E{X} = 1$ back into the average age equation \eqref{eq:mem_age_expectation}, we get
\begin{align}
    \Delta &= 1 + \frac{1}{2} \left( \frac{\frac{2 - p\mu_{\mathrm{m}}}{(p\mu_{\mathrm{m}})^2}}{\frac{1}{p\mu_{\mathrm{m}}}} \right) \nonumber \\
    &= 1 + \frac{1}{2} \left( \frac{2 - p\mu_{\mathrm{m}}}{p\mu_{\mathrm{m}}} \right) \nonumber \\
    &= \frac{1}{p\mu_{\mathrm{m}}} + \frac{1}{2}
\end{align}

The average age at the monitor for this policy is given by 
\begin{align}
    \Delta_\mathrm{mem} & = \frac{1}{p\mu_\mathrm{m}} + \frac{1}{2}\eqnlabel{agememorylessapp}. 
\end{align}

\subsection{Proof of Lemma~\ref{lem:optimal_p}}
\label{app:lemma_optimal_p}
The total average cost is $\bar{C}(p) = \E{A} + w\E{Q}$. 
\begin{align}
    \bar{C}(p) = \left( \frac{1}{p\mu_{\mathrm{m}}}  + \frac{1}{2}\right) + w \frac{\lambda(1-\mu_{\mathrm{eff}}(p))}{\mu_{\mathrm{eff}}(p)-\lambda}
\end{align}
where $\mu_{\mathrm{eff}}(p) = \mu_{\mathrm{f}} - p\Delta\mu$. To find the minimum, we take the derivative with respect to $p$ and set it to zero, yielding:
\begin{align}
    \frac{1}{\mu_{\mathrm{m}}p^2} = \frac{w\lambda(1-\lambda)\Delta\mu}{(\mu_{\mathrm{f}} - p\Delta\mu - \lambda)^2}.
\end{align}
Solving for $p$ gives the optimal probability $p^*$.

\subsection{Derivation of the Myopic Threshold}
\label{app:myopic_derivation}

To find the optimal myopic action, we first evaluate the expected next state age and queue length under each action. For the age evolution, conditioned on the current age $A_n = a$, we have:
\begin{align}
     \E{A_{n+1} \mid A_n = a, U_n = 0} &= a + 1, \eqnlabel{eq:unequal0}\\
    \E{A_{n+1} \mid A_n = a, U_n = 1} &= 1 \cdot \mu_{\mathrm{m}} + (a+1)(1-\mu_{\mathrm{m}}) \nonumber\\
    &= a + 1 - a\mu_{\mathrm{m}} \eqnlabel{eq:unequal1}.
\end{align}

For the expected queue length, the evolution depends on whether the queue is currently empty or has jobs waiting in line:
\begin{align}
    \E{Q_{n+1} \mid Q_n = q, U_n = u} &= q + \lambda - \mu(u) \quad \text{for } q > 0, \eqnlabel{eq: qgreater0}\\
    \E{Q_{n+1} \mid Q_n = 0, U_n = u} &= \lambda(1-\mu(u))  \eqnlabel{eq: qequal0}.
\end{align}

The Myopic scheduler chooses to generate an update ($u=1$) if the expected next-step cost of updating is strictly less than the expected next-step cost of serving the queue:
\begin{align}
    \E{C_{n+1} \mid u=1} < \E{C_{n+1} \mid u=0}. \eqnlabel{eq:myopic_condition_app}
\end{align}

We evaluate \eqnref{eq:myopic_condition_app} for two cases based on the queue backlog.

Case 1 ($q > 0$): Substituting \eqnref{eq:unequal0}, \eqnref{eq:unequal1}, and \eqnref{eq: qgreater0} into \eqnref{eq:myopic_condition_app}, we get:
\begin{align}
    &(a + 1 - a\mu_{\mathrm{m}}) + w(q + \lambda - \mu_{\mathrm{s}}) \nonumber \\
    &\quad < (a + 1) + w(q + \lambda - \mu_{\mathrm{f}}).
\end{align}
Simplifying this expression yields the age threshold for a backlogged queue:
\begin{align}
    a &> \frac{w(\mu_{\mathrm{f}} - \mu_{\mathrm{s}})}{\mu_{\mathrm{m}}}.
\end{align}

Case 2 ($q = 0$): Similarly, substituting the expected values for an empty queue \eqnref{eq: qequal0} into \eqnref{eq:myopic_condition_app} yields:
\begin{align}
    &(a + 1 - a\mu_{\mathrm{m}}) + w(\lambda - \lambda\mu_{\mathrm{s}}) \nonumber \\
    &\quad< (a + 1) + w(\lambda - \lambda\mu_{\mathrm{f}}).
\end{align}
Simplifying this expression provides the threshold when the queue is empty:
\begin{align}
    a &> \frac{w\lambda(\mu_{\mathrm{f}} - \mu_{\mathrm{s}})}{\mu_{\mathrm{m}}}.
\end{align}

\subsection{Stability Analysis of the Myopic Policy}
\label{App:myp_stability}

We establish the stability of the Myopic policy by applying Corollary~1 from Foss et~al.~\cite{foss2013stability}. We model the system as a Markov-modulated process $Z_n = (A_n, Q_n)$, where the age process $A_n$ evolves autonomously but modulates the evolution of the queue $Q_n$. To account for the boundary conditions at $Q_n = 0$, we introduce a \textit{fictitious system} $(\tilde{A}_n, \tilde{Q}_n)$ where the scheduler applies a constant age threshold $K = \lfloor \frac{w(\mu_{\mathrm{f}} - \mu_{\mathrm{s}})}{\mu_{\mathrm{m}}} \rfloor$ regardless of the queue state. 

Following the framework in Foss et al.~\cite{foss2013stability}, we verify the essential conditions for the stability of the fictitious system:
\begin{itemize}
    \item Condition A (Modulator Ergodicity): The age process $\{\tilde{A}_n\}$ must be ergodic. In the fictitious system, the transition probabilities for the age are
    \begin{align}
        &P(\tilde{A}_{n+1} = a' \mid \tilde{A}_n = a) \nonumber \\
        &\quad = 
        \begin{cases} 
            1 & a \leq K, a' = a+1, \\
            \mu_{\mathrm{m}} & a > K, a' = 1, \\
            1-\mu_{\mathrm{m}} & a > K, a' = a+1. 
        \end{cases}
\end{align}
The age process $\{\tilde{A}_n\}$ follows a renewal cycle that resets to state 1 with finite expected time $\E{\tau} = K + 1/\mu_\mathrm{m}$ with every successful update. Thus the age process is positive recurrent and has a unique steady-state distribution $\pi(a)$. 
    \item Condition B1 (Bounded Drift): There exists a constant $U$ such that 
    \begin{align}
    \sup_{a, q} \E{|\tilde{Q}_{n+1} - \tilde{Q}_n| \mid \tilde{A}_n = a, \tilde{Q}_n = q} \leq U.
    \end{align}
    In our discrete-time model, $Q_{n+1} = (Q_n - D_n + X_n)^+$. Since arrivals $X_n \in \{0,1\}$ and departures $D_n \in \{0,1\}$,  $|Q_{n+1} - Q_n| <= 1$. Thus, the condition is satisfied with $U=1$.
    \item Limiting Drift: We define the Lyapunov function $L_2(q) = q$ and the conditional expected increment for the fictitious system as
\begin{align}
D(a, q) = \E{L_2(\tilde{Q}_{n+1}) - L_2(\tilde{Q}_n) \mid \tilde{A}_n = a, \tilde{Q}_n = q}.
\end{align}
This condition requires the existence of a limiting drift function $f(a)$ dependent only on the modulator, such that
\begin{align}
\lim_{q \to \infty} D(a, q) = f(a).
\end{align}
For our discrete-time system, when $\tilde{Q}_n \geq 1$, departures occur with probability $\mu(u)$ and arrivals with probability $\lambda$. Under the Myopic policy, the service rate $\mu(u)$ is deterministic for a given age $a$. Thus, the increment is strictly independent of $q$ for all $q \geq 1$, yielding the exact limit
\begin{align}
f(a) = \lambda - \mu(u) = 
\begin{cases} 
\lambda - \mu_{\mathrm{f}} & a \leq K, \\
\lambda - \mu_{\mathrm{s}} & a > K.
\end{cases}
\end{align}
Since $D(a, q) = f(a)$ for all $q \geq 1$, the conditional expected increment converges trivially in the $L_1$ norm to the limiting drift function $f(a)$ as $q \to \infty$. 
    
\end{itemize}
By verifying the above conditions we establish that the fictitious joint process $(\tilde{A}_n, \tilde{Q}_n)$ is stable (positive recurrent) if the average drift over the stationary distribution $\pi(a)$ is negative:
\begin{align}
\bar{D} = \sum_{a=1}^{\infty} f(a) \pi(a) < 0
\end{align}

The stationary probabilities for the fictitious age process are $\pi(a) = \frac{\mu_{\mathrm{m}}}{K\mu_{\mathrm{m}} + 1}$ for $a \in \{1, \dots, K\}$. The probability of the system being in Update Mode ($a > K$) is $\pi_{\text{Update}} = \frac{1}{K\mu_{\mathrm{m}} + 1}$. The stability condition becomes
\begin{align}
\left(\frac{K\mu_{\mathrm{m}}}{K\mu_{\mathrm{m}} + 1}\right)(\lambda - \mu_{\mathrm{f}}) + \left(\frac{1}{K\mu_{\mathrm{m}} + 1}\right)(\lambda - \mu_{\mathrm{s}}) < 0.
\end{align}
Rearranging terms yields the stability bound
\begin{align}
\lambda < \frac{K\mu_{\mathrm{m}}\mu_{\mathrm{f}} + \mu_{\mathrm{s}}}{K\mu_{\mathrm{m}} + 1}.
\end{align}
The original Myopic policy uses an age threshold $K_0 < K$ when $Q_n = 0$. This lower threshold only increases the frequency of updates when the queue is already empty, which does not decrease the service rate of backlogged jobs. Consider both the original system $Q_n$ and the fictitious system $\tilde{Q}_n$ experiencing identical job arrivals. By design, the fictitious system provides a lower effective service rate. Thus its queue length acts as a strict upper bound for the original queue, ensuring $Q_n \leq \tilde{Q}_n$ for all time slots $n$. Since the fictitious queue is proven to be positive recurrent, the smaller original system is inherently bounded and therefore positive recurrent as well.

\subsection{Proving Assumptions of Sennott~\texorpdfstring{\cite{sennott1989average}}{}}
\label{app:sennott}

\begin{lemma}
    For every state $s = (q,a)$ and discount factor $\beta \in(0,1)$, the discounted value function $V_\beta(s)$ is finite. 
\end{lemma}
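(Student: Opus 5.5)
Since $V_\beta(s)=\inf_f V_{f,\beta}(s)$ and every per-slot cost is nonnegative, it suffices to exhibit one policy whose discounted cost from $s=(q,a)$ is finite. The simplest choice is the \emph{always-serve} policy $f_0$ with $U_t=0$ for all $t$. Its instability in the average-cost sense is irrelevant here, because the discount factor $\beta<1$ will absorb any linear growth. The plan is to bound $\E{C(S_t,U_t)}$ under $f_0$ by a function that grows at most linearly in $t$, and then sum the resulting series against $\beta^t$.

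\textbf{Step 1 (age bound).} Under $f_0$ the age evolves deterministically as $A_t=a+t$. More generally, under \emph{any} policy the age increases by at most one per slot, so $A_t\le a+t$ holds pathwise.

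\textbf{Step 2 (queue bound).} At most one arrival occurs per slot, so $Q_{t}\le q+t$ pathwise, again under any policy.

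\textbf{Step 3 (cost bound and summation).} Steps 1 and 2 give $C(S_t,U_t)\le a+\tfrac12+wq+(1+w)t$. Therefore
\begin{align}
V_\beta(s)\le V_{f_0,\beta}(s)\le \frac{a+\tfrac12+wq}{1-\beta}+\frac{(1+w)\beta}{(1-\beta)^2}<\infty,
\end{align}
using $\sum_t\beta^t t=\beta/(1-\beta)^2$.

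Nonnegativity of the costs gives $V_\beta(s)\ge0$, so $V_\beta(s)$ is finite. There is no real obstacle in this argument. The only point needing care is the indexing convention: the paper's age starts at $a\ge1$ and the cost includes $1/2$, and these should be matched so that the constants above are correct. The bound is in fact uniform over all policies, which may also be useful later, for example to verify Sennott's second condition via $h_\beta(s)=V_\beta(s)-V_\beta(0,1)$ bounded below.
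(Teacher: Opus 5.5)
Your proof is correct, and it takes a genuinely different, more elementary route than the paper. The paper never bounds the discounted sum directly. Instead it invokes Sennott's sufficient condition: it exhibits the Memoryless policy with $0<p<p_{\max}$ as a stationary policy inducing an irreducible, ergodic chain (a stable Geo/Geo/1 queue together with a geometric age renewal process) with finite stationary expected cost. It then reads off finiteness of $V_\beta$ from Sennott's result.

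Your pathwise bounds $A_t\le a+t$ and $Q_t\le q+t$ make the lemma self-contained, apply to every policy simultaneously, and need no stability hypothesis. By contrast, the paper's choice of $p$ tacitly requires $\lambda<\mu_{\mathrm{f}}$ (otherwise $p_{\max}\le 0$), even though the lemma is stated without that assumption.

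What the paper's route buys is reuse. A positive recurrent policy with finite average cost is exactly the ingredient from which Sennott builds the bound $M(s)$, an expected first-passage cost to the reference state, in the final lemma of the appendix. Your always-serve policy cannot play that role because its age never resets.

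For a related reason, your closing aside does not work as stated. Your bound grows like $(1-\beta)^{-2}$, so it gives no control of $h_\beta$ that is uniform in $\beta$, which is what Sennott's second and third conditions demand. In the paper, the lower bound $h_\beta\ge 0$ comes from the monotonicity lemmas instead.
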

\begin{proof}
    By Sennott~\cite{sennott1989average}, it is sufficient to show that there exists at least one stationary policy $f$ that induces an irreducible, ergodic Markov chain yielding a finite expected average cost:
\begin{align}
    \sum _{s} \pi_{s} C(s,f(s)) < \infty,
\end{align}
where $\pi_s$ is the steady-state distribution of the Markov chain.

Consider the Memoryless inspection policy where the scheduler triggers an update with a fixed probability $p$ in every slot. We choose $p$ such that $0 < p < \frac{\mu_{\mathrm{f}} - \lambda}{\mu_{\mathrm{f}} - \mu_{\mathrm{s}}}$. 

Under this policy, the queue service rate is decoupled from the state and becomes a constant effective rate $\mu_{\mathrm{eff}} = p \mu_{\mathrm{s}} + (1-p) \mu_{\mathrm{f}}$. Our choice of $p$ guarantees that the arrival rate is strictly less than the service rate ($\lambda < \mu_{\mathrm{eff}}$). Therefore, the queue evolves as a stable discrete-time Geo/Geo/1 queue. This queue is an irreducible, positive recurrent Markov chain with a finite steady-state expected length $\E{Q} < \infty$.

Simultaneously, the age process $A_n$ resets to $1$ with a strictly positive probability $p\mu_{\mathrm{m}}$ in every slot, and increases by $1$ otherwise. This forms a geometric renewal process. The age process is also positive recurrent with a finite steady-state expected age $\E{A} = \frac{1}{p\mu_{\mathrm{m}}} + \frac{1}{2} < \infty$.

 The queue and age processes are positive recurrent and any state $(q, a)$ can be reached (e.g., the queue can empty and the age can reset in a finite number of steps with positive probability), thus the joint Markov chain is irreducible and ergodic. The expected steady-state cost is
\begin{align}
    \sum_{s} \pi_s C(s,u) &= \E{A} + w\E{Q} < \infty.
\end{align}
Since there exists a policy with a finite average cost, the discounted cost $V_{\beta}(s)$ is finite for all $s$ and $\beta$.
\end{proof}

\begin{lemma} [Monotonicity in Age] 
\label{lemma:monotonicityage}
    For fixed $q$, $V_{\beta, n}(q, a)$ is monotonically non-decreasing in $a$.
\end{lemma}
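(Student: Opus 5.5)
We argue by induction on the value-iteration index $n$, using the recursion \eqref{eq:valuefunction}. The base case is immediate: $V_{\beta,0}(q,a)=0$ for all $(q,a)$, which is trivially non-decreasing in $a$. For the inductive step, we assume that for every fixed $q$ the function $a\mapsto V_{\beta,n-1}(q,a)$ is non-decreasing. We then show the same holds for $V_{\beta,n}$.

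Fix $q$ and take $a_1\le a_2$. We examine the three pieces of \eqref{eq:valuefunction} separately.

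\begin{itemize}
\item The immediate cost $a+wq$ (plus the constant $1/2$) is non-decreasing in $a$.
\item The Fast Mode term $\beta\,\E{V_{\beta,n-1}(Q'_0,a+1)}$ involves $Q'_0$, whose distribution depends only on $q$ and $u=0$, not on $a$. Since $V_{\beta,n-1}(q',a_1+1)\le V_{\beta,n-1}(q',a_2+1)$ for every realization $q'$, taking expectations gives the desired inequality.
\item In the Update Mode term, the piece $\beta\mu_{\mathrm{m}}\E{V_{\beta,n-1}(Q'_1,1)}$ does not depend on $a$ at all. The remaining piece $\beta(1-\mu_{\mathrm{m}})\E{V_{\beta,n-1}(Q'_1,a+1)}$ is non-decreasing in $a$ by the same pointwise argument, because the law of $Q'_1$ is independent of $a$ and $1-\mu_{\mathrm{m}}\ge 0$.
\end{itemize}

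It follows that both action-values $C((q,a),u)+\beta\sum_{s'}P_{ss'}(u)V_{\beta,n-1}(s')$, for $u=0$ and $u=1$, are non-decreasing in $a$. The pointwise minimum of two non-decreasing functions is non-decreasing, so $V_{\beta,n}(q,\cdot)$ is non-decreasing, which completes the induction.

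The key structural fact is that the queue transition kernel $P(q\to q'\mid u)$ does not depend on $a$. This lets us couple the next queue state $Q'_u$ identically for $a_1$ and $a_2$, so the comparison reduces to a pointwise one in the age coordinate. Age enters the transitions only through the deterministic shift $a\mapsto a+1$ or the reset to $1$, and both maps preserve order.

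No step here is a real obstacle. The one point requiring care is not to compare across actions: the minimizing action may differ between $a_1$ and $a_2$. Applying the minimum-of-monotone-functions argument, rather than fixing one action, handles this.

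One may also want the property for the limit $V_\beta$. That follows by letting $n\to\infty$, since $V_{\beta,n}\to V_\beta$ pointwise by standard value-iteration convergence under finiteness of $V_\beta$, which the preceding lemma establishes. Pointwise limits preserve monotonicity.
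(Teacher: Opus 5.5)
Your proposal is correct and follows essentially the same route as the paper: induction on the iteration index, showing that both the Fast Mode and Update Mode continuation terms are non-decreasing in $a$ (the paper's $f'\ge f$ and $g'\ge g$), then concluding via monotonicity of the minimum, with the immediate cost adding a non-negative increment. Your base case $V_{\beta,0}\equiv 0$ is actually cleaner than the paper's, which asserts $V_{\beta,1}(q,a)=0$ even though the recursion gives $V_{\beta,1}(q,a)=a+wq$; this is still monotone in $a$, so nothing breaks.
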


\begin{proof}
    We prove this lemma by induction. For $n=1$, we have 
    \begin{align}
        V_{\beta, 1}(q, a)=0
    \end{align}
    for all $q\in \mathbb{N}_{0}$ and $a\in \mathbb{N}$.
    Thus the lemma holds for $n=1$. 
    Now let us say the lemma holds for some $n=k$. Then,
\begin{align}
    V_{\beta, k}(q, a)\le V_{\beta, k}(q, a+1)
\end{align}
for all $q\in \mathbb{N}_{0}$ and $a\in \mathbb{N}$.
Now we need to show the claim holds for $n=k+1$. We have
\begin{align}
    &V_{\beta, k+1}(q, a+1)- V_{\beta, k+1}(q, a) \nonumber \\
    &\quad = (a+1 + wq) - (a + wq) \nonumber \\
    &\qquad + \min \Bigl( \beta \E{V_{\beta, k}(Q'_0, a+2)}, \nonumber \\
    &\qquad \qquad \beta \mu_{\mathrm{m}} \E{V_{\beta, k}(Q'_1, 1)} \nonumber \\
    &\qquad \qquad + \beta (1-\mu_{\mathrm{m}}) \E{V_{\beta, k}(Q'_1, a+2)} \Bigr) \nonumber \\
    &\qquad - \min \Bigl( \beta \E{V_{\beta, k}(Q'_0, a+1)}, \nonumber \\
    &\qquad \qquad \beta \mu_{\mathrm{m}} \E{V_{\beta, k}(Q'_1, 1)} \nonumber \\
    &\qquad \qquad + \beta (1-\mu_{\mathrm{m}}) \E{V_{\beta, k}(Q'_1, a+1)} \Bigr), \label{eq:valuefunctiondiff_age}
\end{align}
for all $q\in\mathbb{N}_0$ and $a\in \mathbb{N}$.

Let
\begin{align}
    f'&=\beta \E{V_{\beta, k}(Q'_0, a+2)}\\
    f&=\beta \E{V_{\beta, k}(Q'_0, a+1)}\\
    g'&=\beta \mu_{\mathrm{m}} \E{V_{\beta, k}(Q'_1, 1)} \nonumber\\
    &\qquad + \beta (1-\mu_{\mathrm{m}}) \E{V_{\beta, k}(Q'_1, a+2)}\\
    g&=\beta \mu_{\mathrm{m}} \E{V_{\beta, k}(Q'_1, 1)} \nonumber\\
    &\qquad + \beta (1-\mu_{\mathrm{m}}) \E{V_{\beta, k}(Q'_1, a+1)}.
\end{align}
We see that
\begin{align}
    f'-f &= \beta \E {V_{\beta, k}(Q'_0, a+2)-V_{\beta, k}(Q'_0, a+1)}
\end{align}
and
\begin{align}
    g'-g &= \beta (1-\mu_{\mathrm{m}}) \E {V_{\beta, k}(Q'_1, a+2)-V_{\beta, k}(Q'_1, a+1)}.
\end{align}
From the induction assumption we know $V_{\beta, k}(Q', a+2)-V_{\beta, k}(Q', a+1)\ge 0$. Thus, we have $f'\ge f$ and $g'\ge g$. Going back to \eqref{eq:valuefunctiondiff_age}, we have
\begin{align}
    &V_{\beta, k+1}(q, a+1)- V_{\beta, k+1}(q, a)\nonumber\\
    &\qquad=1 + \min (f',g')- \min(f,g)\geq 0 \label{eq:fact1_age}
\end{align}
for all $q\in \mathbb{N}_0$ and $a\in \mathbb{N}$. The inequality \eqref{eq:fact1_age} follows from the property that $\min(f',g') \ge \min(f,g)$ when $f'\ge f$ and $g'\ge g$.

Thus the claim holds for $n=k+1$, completing the proof of Lemma \ref{lemma:monotonicityage}.
\end{proof}

\begin{lemma}[Monotonicity in Queue Length] \label{Lemma3}
    For fixed $a$, $V_{\beta, n}(q, a)$ is monotonically non-decreasing in $q$.
\end{lemma}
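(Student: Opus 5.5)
Mirroring the proof of Lemma~\ref{lemma:monotonicityage}, I will argue by induction on $n$. The base case is immediate because $V_{\beta,0}\equiv 0$, or equivalently $V_{\beta,1}$ as the paper indexes it. For the inductive step, assume $V_{\beta,k}(q,a)\le V_{\beta,k}(q+1,a)$ for all $q\in\mathbb{N}_0$ and $a\in\mathbb{N}$. I then want to show
\begin{align}
V_{\beta,k+1}(q+1,a)-V_{\beta,k+1}(q,a) = w + \min(f',g')-\min(f,g)\ge 0 .
\end{align}
The four terms are defined as follows:
\begin{itemize}
\item $f'=\beta\E{V_{\beta,k}(Q'_0(q+1),a+1)}$ and $f=\beta\E{V_{\beta,k}(Q'_0(q),a+1)}$, the Fast Mode terms started from queue lengths $q+1$ and $q$;
\item $g'$ and $g$, the corresponding Update Mode terms, each being $\mu_{\mathrm m}$ times the value at age $1$ plus $(1-\mu_{\mathrm m})$ times the value at age $a+1$, with next queue $Q'_1(q+1)$ or $Q'_1(q)$.
\end{itemize}
Since $w>0$, it suffices to prove $f'\ge f$ and $g'\ge g$, and then use monotonicity of $\min$ as in \eqref{eq:fact1_age}.

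The key step is a coupling of the one-step queue transitions. For each action $u$, I need the next queue length started from $q+1$ to stochastically dominate the next queue length started from $q$. I will write both next states using common randomness: the same arrival $X_n$ and the same uniform variable deciding whether a departure occurs (probability $\mu(u)$). Then $Q'_u(q)=q+X-D$ and $Q'_u(q+1)=q+1+X-D'$.
\begin{itemize}
\item For $q\ge1$, the departure indicators coincide, $D'=D$, so $Q'_u(q+1)=Q'_u(q)+1$.
\item For $q=0$, a departure is possible from state $1$ even without an arrival, whereas from state $0$ it requires $X=1$. In every realization one still has $Q'_u(1)\ge Q'_u(0)$. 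If $X=0$, then $Q'_u(0)=0$ and $Q'_u(1)\in\{0,1\}$. If $X=1$, both states face the same departure indicator, so $Q'_u(1)=Q'_u(0)+1$.
\end{itemize}
Thus pathwise $Q'_u(q+1)\ge Q'_u(q)$ for every $q$ and $u$. The age component of the next state is identical under the coupling, because it depends only on the action and on the update success, which I couple as well.

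Having the pathwise ordering, I apply the induction hypothesis, extended by transitivity to any $q_1\le q_2$, pointwise inside the expectations. For Fast Mode this gives $V_{\beta,k}(Q'_0(q+1),a+1)\ge V_{\beta,k}(Q'_0(q),a+1)$ on every sample path, hence $f'\ge f$. For Update Mode, the same argument applies separately to the age-$1$ term and the age-$(a+1)$ term, giving $g'\ge g$. Combining these with $w>0$ completes the induction, and letting $n\to\infty$ transfers the result to $V_\beta$.

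I expect the main obstacle to be the boundary at $q=0$, where the departure probability switches on through the indicator $\I{q>0}$ and the naive identity ``$Q'(q+1)=Q'(q)+1$'' fails. The plan is to handle it with the explicit case check above, so that only ordering, and not a unit shift, is required. That is also why the induction hypothesis should be used in its monotone form rather than as a statement about increments.
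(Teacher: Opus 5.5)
Your proof is correct and follows the paper's argument. Both proceed by induction on $n$, write the increment as $w+\min(f',g')-\min(f,g)$, and get $f'\ge f$ and $g'\ge g$ from a pathwise coupling of next-slot queue lengths under common arrival and service randomness combined with the induction hypothesis. Your case check at $q=0$ is in fact more faithful to the model than the paper's coupling $Q'_{u,q}=q-S\,\I{q>0}+X$. That formula omits the same-slot arrival and departure from an empty queue, which the stated transition law $P(0\to 1\mid u)=\lambda(1-\mu(u))$ permits.
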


\begin{proof}
    We prove this lemma by induction. For $n=1$, we have 
    $$V_{\beta, 1}(q, a)=0$$
    for all $q\in \mathbb{N}_{0}$ and $a\in \mathbb{N}$.
    Thus the lemma holds for $n=1$. 
    Now let us say the lemma holds for some $n=k$. Then,
    $$V_{\beta, k}(q, a)\le V_{\beta, k}(q+1, a)$$
    for all $q\in \mathbb{N}_{0}$ and $a\in \mathbb{N}$.
    Now we need to show the claim holds for $n=k+1$. We have
    \begin{align}
    &V_{\beta, k+1}(q+1, a)- V_{\beta, k+1}(q, a) \nonumber\\
    &\quad = (a + w(q+1)) - (a + wq) \nonumber\\
    &\qquad + \min \Bigl( \beta \E{V_{\beta, k}(Q'_{0, q+1}, a+1)}, \nonumber\\
    &\qquad \qquad \beta \mu_{\mathrm{m}} \E{V_{\beta, k}(Q'_{1, q+1}, 1)} \nonumber\\
    &\qquad \qquad + \beta (1-\mu_{\mathrm{m}}) \E{V_{\beta, k}(Q'_{1, q+1}, a+1)} \Bigr) \nonumber\\
    &\qquad - \min \Bigl( \beta \E{V_{\beta, k}(Q'_{0, q}, a+1)}, \nonumber\\
    &\qquad \qquad \beta \mu_{\mathrm{m}} \E{V_{\beta, k}(Q'_{1, q}, 1)} \nonumber\\
    &\qquad \qquad + \beta (1-\mu_{\mathrm{m}}) \E{V_{\beta, k}(Q'_{1, q}, a+1)} \Bigr), \label{eq:valuefunctiondiff_queue}
\end{align}
    where $Q'_{u, q}$ denotes the next queue state given current queue $q$ and action $u$.
    
    Let
    \begin{align}
        f'&=\beta \E{V_{\beta, k}(Q'_{0, q+1}, a+1)}\\
        f&=\beta \E{V_{\beta, k}(Q'_{0, q}, a+1)}\\
        g'&=\beta \mu_{\mathrm{m}} \E{V_{\beta, k}(Q'_{1, q+1}, 1)} \nonumber\\
        &\qquad+ \beta (1-\mu_{\mathrm{m}}) \E{V_{\beta, k}(Q'_{1, q+1}, a+1)}\\
        g&=\beta \mu_{\mathrm{m}} \E{V_{\beta, k}(Q'_{1, q}, 1)} \nonumber\\
        &\qquad+ \beta (1-\mu_{\mathrm{m}}) \E{V_{\beta, k}(Q'_{1, q}, a+1)}.
    \end{align}
    
   Let $S \in \{0,1\}$ be the binary indicator of a potential service completion under action $u$, and $X \in \{0,1\}$ be the arrival indicator. The queue length in the next slot from state $q$ is:
    $$Q'_{u, q} = q - S \I{q > 0} + X$$
    Evaluating the difference between starting with $q+1$ jobs versus $q$ jobs under the exact same arrival and service realizations yields
    $$Q'_{u, q+1} - Q'_{u, q} = 1 - S (\I{q+1 > 0} - \I{q > 0})$$
    If the initial queue is empty ($q=0$), the difference is $1 - S \ge 0$, since $S \le 1$. If the initial queue is strictly positive ($q > 0$), the difference is $1 - S(0) = 1 \ge 0$. Therefore, for all $q \in \mathbb{N}_0$, we have
    $$Q'_{u, q+1} \ge Q'_{u, q}$$
    
    From the induction assumption, $V_{\beta, k}(q, a)$ is non-decreasing in $q$. Since the next state queue length is strictly non-decreasing with respect to the initial queue length, we have
    $$\E{V_{\beta, k}(Q'_{0, q+1}, a+1)} \ge \E{V_{\beta, k}(Q'_{0, q}, a+1)},$$
    which implies $f' \ge f$. Similarly, we can show $g' \ge g$.
    
    Going back to \eqref{eq:valuefunctiondiff_queue}, we have
    \begin{align}
        &V_{\beta, k+1}(q+1, a)- V_{\beta, k+1}(q, a)\nonumber\\
        &\qquad=w + \min (f',g')- \min(f,g)\geq w > 0 \label{eq:fact1_queue}
    \end{align}
    for all $q\in \mathbb{N}_0$ and $a\in \mathbb{N}$.
    
    Thus the claim holds for $n=k+1$, completing the proof of Lemma \ref{Lemma3}.
\end{proof}

Let $h_\beta (s)= V_\beta (s)- V_{\beta}(s_{\mathrm{ref}})$, where the reference state is defined as $s_{\mathrm{ref}} = (0,1)$.
\begin{lemma}\label{Lemma 4}
       There exists a non-negative $N$ such that $-N\le h_\beta (s) $ for all  $s\in \mathcal{S}$ and $\beta$. 
\end{lemma}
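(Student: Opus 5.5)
The plan is to show that $N=0$ works, i.e.\ $h_\beta(s)\ge 0$ for every $s\in\mathcal{S}$ and every $\beta\in(0,1)$. Equivalently, the reference state $s_{\mathrm{ref}}=(0,1)$ minimizes $V_\beta$. This follows from the two monotonicity lemmas already proved for the value-iteration iterates, once they are passed to the limit $n\to\infty$.

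\textbf{Step 1 (convergence of value iteration).} By the first lemma of this appendix, $V_\beta(s)<\infty$ for all $s$. The one-step cost $C(s,u)=a+\tfrac12+wq$ is nonnegative and the action set is finite. Starting from $V_{\beta,0}\equiv 0$, the iterates $V_{\beta,n}$ are therefore nondecreasing in $n$ and converge pointwise to $V_\beta$. This is the standard result for discounted MDPs with nonnegative costs and finite action sets (e.g.\ Sennott~\cite{sennott1989average}, Puterman~\cite{puterman1994markov}). I expect this to be the only point that needs care: the costs are unbounded, so I would rely on monotone convergence of the iterates rather than on a contraction argument.

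\textbf{Step 2 (monotonicity of the limit).} Lemma~\ref{lemma:monotonicityage} gives $V_{\beta,n}(q,a)\le V_{\beta,n}(q,a+1)$, and Lemma~\ref{Lemma3} gives $V_{\beta,n}(q,a)\le V_{\beta,n}(q+1,a)$, for every $n$. Weak inequalities are preserved under pointwise limits, so both monotonicity properties hold for $V_\beta$.

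\textbf{Step 3 (chain of inequalities).} For an arbitrary state $s=(q,a)$ with $q\ge 0$ and $a\ge 1$, I first move the queue coordinate down to $0$ using the monotonicity in $q$, and then move the age coordinate down to $1$ using the monotonicity in $a$:
\begin{align}
V_\beta(q,a)\ \ge\ V_\beta(0,a)\ \ge\ V_\beta(0,1)=V_\beta(s_{\mathrm{ref}}).
\end{align}
Hence $h_\beta(s)=V_\beta(s)-V_\beta(s_{\mathrm{ref}})\ge 0$ for all $s$ and all $\beta$, so $N=0$ works. The bound is uniform in $\beta$ because neither step depends on $\beta$. This is exactly the form required by Sennott's lower-boundedness condition on the relative value functions.
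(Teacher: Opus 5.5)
Your proof is correct and follows the paper's argument: you pass the monotonicity of the value-iteration iterates in $a$ and in $q$ to the limit $V_{\beta,n}\uparrow V_\beta$, conclude $V_\beta(q,a)\ge V_\beta(0,1)$, and take $N=0$. The only differences are cosmetic: you route the chain through $(0,a)$ where the paper goes through $(q,1)$, and you spell out the monotone-convergence justification for value iteration with unbounded nonnegative costs and finite action sets, which the paper simply asserts.
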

\begin{proof}
    For each $\beta$, we have 
\begin{align}
    V_{\beta, n}(q,a)\uparrow V_\beta(q,a) \eqnlabel{VnapproachV},
\end{align}
for every $q\in \mathbb{N}_0$ and $a\in \mathbb{N}$. 
From Lemma~\ref{lemma:monotonicityage} and \eqnref{VnapproachV}, we have
\begin{align}
    V_\beta (q,a)- V_\beta (q,1)\geq 0 \eqnlabel{ineq1}
\end{align} 
and from Lemma~\ref{Lemma3} and \eqnref{VnapproachV},  we have
\begin{align}
    V_\beta(q,1)- V_{\beta}(0,1)\geq 0. \eqnlabel{ineq2}
\end{align}
Adding \eqnref{ineq1} and \eqnref{ineq2}, we get
\begin{align}
    V_\beta(q,a)- V_\beta(0,1)\geq 0. \eqnlabel{ass2proof}
\end{align}
From \eqnref{ass2proof}, we see that the Lemma holds for $N=0$. 
\end{proof}

\begin{lemma} \label{Assumption 3}
     There exists a non-negative $M(s)$, such that $h_\beta (s)\le M(s)$ for every $s$ and $\beta$. For every $s$ there exists an action $u(s)$ such that $\sum _{s'} P_{ss'}(u(s)) M(s')<\infty$. In addition, $\sum_{s'} P_{ss'}(u) M(s')<\infty$ for all $s\in \mathcal{S}$ and $u\in \mathcal{U}$.
\end{lemma}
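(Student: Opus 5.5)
Our plan is the standard Sennott route: bound $h_\beta(s)=V_\beta(s)-V_\beta(s_{\mathrm{ref}})$ above by the expected cost incurred by a fixed policy before it first hits the reference state $s_{\mathrm{ref}}=(0,1)$. We fix the Memoryless policy $f_p$ from the proof of the first lemma of this appendix, with $0<p<p_{\max}$. Under $f_p$ the joint chain $(Q_n,A_n)$ is irreducible and positive recurrent and has finite average cost. Let $\tau$ be the first entrance time to $(0,1)$ and define
\begin{align}
M(s)=\Eop_{f_p}\Bigl[\sum_{t=0}^{\tau-1} C(S_t,U_t)\Bigm\vert S_0=s\Bigr].
\end{align}
Because the chain is positive recurrent with finite average cost, the standard result (Sennott~\cite{sennott1989average}, Prop.~4) gives $M(s)<\infty$ for every $s$. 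The key bound $h_\beta(s)\le M(s)$ then comes from the usual stopping argument. Follow $f_p$ until $\tau$ and act optimally afterwards. Then
\[
V_\beta(s)\le \Eop[\textstyle\sum_{t<\tau}\beta^tC]+\Eop[\beta^\tau]V_\beta(0,1)\le M(s)+V_\beta(0,1),
\]
using $C\ge 0$ and $\beta^\tau\le1$. Nonnegativity of $M$ holds because $C\ge 0$. This step needs only that $V_\beta(0,1)$ is finite, which is the first lemma.

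For the summability conditions, note that transitions out of any state $s=(q,a)$ under either action reach only states with $q'\in\{q-1,q,q+1\}$ and $a'\in\{1,a+1\}$. So $\sum_{s'}P_{ss'}(u)M(s')$ is a finite sum of finite numbers for every $s$ and $u$. This gives the third claim directly, and the second follows from it with, say, $u(s)=0$.

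We expect the main obstacle to be rigorously justifying $M(s)<\infty$ for every starting state, and not only for $s=(0,1)$. The plan is as follows. First, Sennott's result guarantees finiteness of the expected cost until hitting the reference state for every state reachable from it under an irreducible, positive recurrent chain with finite stationary cost. Irreducibility on all of $\mathcal{S}$ was argued in the first lemma. If a more explicit bound is wanted, we will dominate the hitting time by the sum of two pieces: the time for the Geo/Geo/1 queue with rate $\mu_{\mathrm{eff}}$ to drain from $q$, which has mean $O(q)$, and a subsequent geometric wait for a successful update while the queue is empty. The costs accrued along the way are bounded by $(a+t)+w(q+t)$, so the second moment of the hitting time, which is finite for the Geo/Geo/1 busy period, gives $M(q,a)=O(q^2+aq+a)$.
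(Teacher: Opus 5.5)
Your proposal is correct and takes the same route as the paper. The paper simply defers to Sennott's argument: take $M(s)$ to be the expected cost until first entrance to $(0,1)$ under a fixed ergodic policy with finite average cost (here the Memoryless policy), and obtain $h_\beta(s)\le M(s)$ from the stopping argument. Your write-up is more explicit than the paper's one-line citation; in particular, noting that each state has at most six successors makes the summability conditions immediate, including the ``for all $u$'' part. The optional bound $M(q,a)=O(q^2+aq+a)$ also holds, but after the queue drains you should allow for the queue leaving $0$ before a successful update: this gives a geometric number of queue excursions, each with finite second moment, rather than a single geometric wait.
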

\begin{proof}
    The proof is same as that of Lemma 1~\cite{sennott1989average}. 
\end{proof}

\end{document}